\newtheorem{theorem}{Theorem}
\newtheorem{definition}{Definition}
\begin{document}

\title{Causal Rigidity and the Single-Unit Universe: Integrating the Alexandrov-Zeeman and Unruh Clock Scales}

\author{Karl Svozil\,\orcidlink{0000-0001-6554-2802}}
\email{karl.svozil@tuwien.ac.at}
\homepage{http://tph.tuwien.ac.at/~svozil}

\affiliation{Institute for Theoretical Physics,
TU Wien,
Wiedner Hauptstrasse 8-10/136,
1040 Vienna,  Austria}

\date{\today}

\begin{abstract}
We unify two complementary viewpoints on relativistic spacetime and the counting of fundamental constants. Operationally, Matsas, Pleitez, Saa, and Vanzella (MPSV) have recently argued that relativistic spacetime requires only a single fundamental dimensional constant. Mathematically, theorems due to Alexandrov and Zeeman demonstrate that the light-cone structure determines the spacetime geometry only up to a conformal factor. We show that these approaches are mutually reinforcing: the Alexandrov-Zeeman theorems establish the rigid conformal structure of spacetime, while the ``bona fide clock'' required by MPSV serves the necessary mathematical role of breaking the dilation symmetry. We provide a formal derivation proving that the normalization of a single clock worldline is sufficient to select a unique metric from the conformal class, thereby clarifying that the number of fundamental constants is exactly one.
\end{abstract}

\maketitle

\section{Introduction}

The question ``How many fundamental dimensional constants does physics require?'' has been a subject of enduring debate, most famously in the ``trialogue'' between Duff, Okun, and Veneziano (DOV)~\cite{DuffOkunVeneziano}. While there is consensus that only dimensionless quantities are ultimately observable, disagreements persist regarding the number of independent \emph{standards} (or base units) necessary to describe dimensionful observables.

Recently, Matsas, Pleitez, Saa, and Vanzella (MPSV)~\cite{matsas-2024} revisited this issue from a \emph{spacetime-based} perspective. They argue that once a relativistic spacetime is assumed as the starting point, the units required to construct that spacetime are sufficient to express all observables. Specifically, they conclude that in a relativistic spacetime, the number of fundamental dimensional constants is \emph{one}.

Parallel to this metrological debate, the mathematical relativity community has long established---via theorems by Alexandrov~\cite{alex3} and Zeeman~\cite{Zeeman}---that the null (light-cone) structure of spacetime is incredibly rigid. These theorems state that any map preserving the causal order is a Lorentz transformation up to a global dilation (scale factor).

In this paper, we unify these two perspectives. In Sec.~\ref{sec:Operational}, we review the operational argument provided by MPSV. In Sec.~\ref{sec:Geometric}, we review the geometric constraints imposed by causal automorphisms. Finally, in Sec.~\ref{sec:Formal}, we provide a formal derivation demonstrating that the ``bona fide clock'' of MPSV is the precise mathematical object required to break the conformal symmetry left open by incidence geometry.

\section{The Operational Perspective: Measuring Space with Time}
\label{sec:Operational}

MPSV argue that to construct a relativistic spacetime physically, one requires specific apparatuses. Unlike Galilean spacetime, which requires both rulers and clocks (because space and time are decoupled~\cite{matsas-2024}), relativistic spacetime requires only ``bona fide clocks.''

To prove that spatial rulers are redundant, MPSV utilize a protocol attributed to William George Unruh. Consider three inertial clocks $C_1, C_2, C_3$:
\begin{enumerate}
    \item $C_1$ travels from an observer $A$ to $B$.
    \item $C_2$ travels from $B$ back to $A$.
    \item $C_3$ remains at rest with $A$.
\end{enumerate}
If $\tau_1$ and $\tau_2$ are the proper time durations measured by the traveling clocks, and $\tau_3$ is the duration measured by the stationary clock,
the spatial distance $D$ between $A$ and $B$ is given by the height corresponding to the base $\tau_3$ of the triangle formed by sides $\tau_1, \tau_2, \tau_3$ (derived from Heron's formula~\cite{matsas-2024}):
\begin{equation}
D =  \frac{\sqrt{(\tau_3^2 - \tau_1^2 - \tau_2^2)^2 - 4\tau_1^2\tau_2^2}}{2\tau_3}.
\label{eq:Unruh}
\end{equation}
This formula yields a spatial length in units of time (seconds). It demonstrates physically that the spatial metric can be derived entirely from time measurements, rendering the meter a ``disposable'' unit and $c$ a mere conversion factor.
When the two traveling clocks record the same proper time, $\tau_1 = \tau_2 \equiv \tau$, the expression simplifies to
$
D = (\tau_3/2) \sqrt{1 - \left(2\tau/\tau_3\right)^2}
$.

\section{The Geometric Perspective: Alexandrov-Zeeman Theorems}
\label{sec:Geometric}

We now turn to the mathematical structure of Minkowski spacetime.
We consider the manifold $M \simeq \mathbb{R}^n$ (with $n \geq 3$), equipped with the standard flat metric $\eta_{ab} = \text{diag}(-1, 1, \dots, 1)$. Throughout this paper, we adopt the ``mostly plus'' signature convention $(-,+,+,\dots,+)$, so that timelike vectors have negative squared norm and spacelike vectors have positive squared norm.
The defining feature of relativistic geometry is the causal structure, encoded either by the incidence of light cones or the partial causal ordering of events.

A fundamental question in mathematical relativity is the extent to which this causal structure alone determines the geometry. This is resolved by a class of rigidity results known as Alexandrov-Zeeman type theorems. We state these results with their precise hypotheses, as the conditions are essential to their validity.

\subsection{Alexandrov: The Rigidity of Light Cones}

In the 1950s, A.~D.~Alexandrov proved that the geometry of spacetime is encoded in the ``skeleton'' of light rays~\cite{alex3}.

\begin{theorem}[Alexandrov]
\label{theorem:Alexandrov}
Let $f: M \to M$ be a \emph{bijection} of the entire spacetime manifold $M \simeq \mathbb{R}^n$ with $n \ge 3$. If $f$ maps null lines to null lines (i.e., preserves the light-cone incidence structure), then $f$ is a linear conformal automorphism.
\end{theorem}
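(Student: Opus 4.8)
The plan is to reduce the claim to the fundamental theorem of affine geometry, by first upgrading the hypothesis ``null lines go to null lines'' to the much stronger statement ``all affine lines go to affine lines,'' and then identifying the resulting affine map as a dilation composed with a Lorentz transformation and a translation.

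First I would extract what the hypothesis supplies directly. Because each null line is carried to a null line, $f$ preserves collinearity of any triple lying on a common null line, and it sends the light cone of a point $p$, which is the union of all null lines through $p$, into a union of null lines. The decisive and hardest step is to recover arbitrary affine collinearity from this null incidence data alone; this is exactly where the hypothesis $n \ge 3$ is indispensable. In $1+1$ dimensions the null lines split into two independent families of left- and right-movers, and any product map $(u,v)\mapsto(g(u),h(v))$ with $g,h$ arbitrary bijections preserves null lines while being neither affine nor conformal, so no rigidity can hold there. For $n \ge 3$ I would instead exploit the richer incidence pattern: two null lines are skew, meet in a point, or span a degenerate (totally null) $2$-plane, and these alternatives are detectable purely from intersection data that $f$ respects. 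Using this I would show that $f$ maps null $2$-planes to null $2$-planes and preserves parallelism of null lines, which reconstitutes the light cone of each point as an intrinsic object, forces vertices to map to vertices, and hence makes $f$ preserve lightlike separation faithfully. A timelike or spacelike line can then be pinned down intrinsically through its incidence with a suitable configuration of null lines, promoting $f$ to a collinearity-preserving bijection of all of $\mathbb{R}^n$.

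With full collinearity in hand I would invoke the fundamental theorem of affine geometry: every bijection of $\mathbb{R}^n$ with $n \ge 2$ that carries lines to lines is semi-affine, and since the identity is the only field automorphism of $\mathbb{R}$, the map is genuinely affine, $f(x) = A x + b$ with $A \in \mathrm{GL}(n,\mathbb{R})$ and $b \in \mathbb{R}^n$.

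It remains to constrain the linear part. After absorbing the translation, $A$ maps the null cone $\{x : \eta(x,x)=0\}$ onto itself, so the quadratic form $x \mapsto \eta(Ax,Ax)$ vanishes on the same nondegenerate quadric as $\eta$; since two quadratic forms with the same zero quadric are proportional, $A^{\mathsf{T}} \eta A = \lambda\, \eta$ for some nonzero $\lambda$. As $A^{\mathsf{T}}\eta A$ is congruent to $\eta$ it has the same signature, which for $n \ge 3$ excludes $\lambda < 0$ and forces $\lambda > 0$. Writing $\lambda = \Omega^2$, the matrix $\Omega^{-1}A$ lies in the Lorentz group, so $A$ is a Lorentz transformation followed by the dilation $x \mapsto \Omega x$. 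Thus $f$ is an affine conformal automorphism, establishing the theorem. I expect the reconstruction of non-null collinearity in the second paragraph to be the genuine obstacle; everything after it is standard linear algebra together with the invocation of a known theorem.
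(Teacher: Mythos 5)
The paper does not prove this theorem; it states it as a classical result and cites Alexandrov, so there is no in-text proof to compare yours against. Judged on its own terms, your plan is the standard route from the literature (reduce to the fundamental theorem of affine geometry, then do linear algebra on the quadric), and your endgame is essentially airtight: two nondegenerate quadratic forms of indefinite signature with the same zero quadric are proportional, congruence preserves signature so $\lambda<0$ is excluded precisely when $n\ge 3$, the absence of nontrivial field automorphisms of $\mathbb{R}$ upgrades semi-affine to affine, and your $1{+}1$-dimensional counterexample $(u,v)\mapsto(g(u),h(v))$ correctly locates where the dimension hypothesis bites.

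Two caveats. First, the step you yourself flag as the obstacle --- recovering arbitrary affine collinearity from preservation of null lines alone --- is where essentially the entire content of Alexandrov's proof lives, and your sketch of it is a promissory note rather than an argument: ``these alternatives are detectable purely from intersection data'' is exactly the claim that needs a page of careful incidence bookkeeping (e.g., characterizing parallelism of null lines, and hence degenerate hyperplanes, purely through which null lines meet which). Second, there is a factual slip in that sketch: in Lorentzian signature there are \emph{no} totally null $2$-planes, because two null vectors are orthogonal if and only if they are parallel, so two linearly independent null directions always span a plane on which the induced form has signature $(1,1)$. The degenerate $2$-planes that do exist are those tangent to a light cone, spanned by a null vector and an orthogonal spacelike vector; these (equivalently, pairs of parallel null lines) are the objects you actually want to detect from incidence data. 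The slip does not doom the strategy, but as written the trichotomy you base the reconstruction on is wrong, so the crucial middle paragraph would need to be rebuilt on the correct classification of null-line pairs (intersecting, parallel, or skew) before the rest of the argument can stand on it.
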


The requirement that $f$ be a \emph{global} bijection of the entire spacetime is crucial. Local maps preserving null structure need not extend to conformal transformations; indeed, general relativity admits local diffeomorphisms that preserve the causal structure within a region but are not globally conformal. The theorem's power derives precisely from its global character.

\subsection{Zeeman: Causality Implies the Lorentz Group}

In 1964, E.~C.~Zeeman strengthened this perspective by focusing on the partial ordering of events~\cite{Zeeman}. We define the causal relations as follows: $x \prec y$ (read ``$x$ chronologically precedes $y$'') if a future-directed timelike or null signal can travel from $x$ to $y$, and $x \succ y$ if $y \prec x$.

\begin{theorem}[Zeeman]
Let $f: M \to M$ be a bijection of the entire Minkowski spacetime $M \simeq \mathbb{R}^n$ with $n \ge 3$. Suppose $f$ preserves the \emph{full} causal order, meaning:
\begin{equation}
x \prec y \iff f(x) \prec f(y) \quad \text{and} \quad x \succ y \iff f(x) \succ f(y).
\end{equation}
Then $f$ is a composition of:
\begin{enumerate}
    \item an orthochronous Lorentz transformation $\Lambda \in O^\uparrow(1, n-1)$,
    \item a spacetime translation $a \in \mathbb{R}^n$, and
    \item a global dilation $x \mapsto \lambda x$ with $\lambda > 0$.
\end{enumerate}
\end{theorem}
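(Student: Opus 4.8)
The plan is to prove Zeeman's theorem by reducing it to Alexandrov's theorem (Theorem~\ref{theorem:Alexandrov}), which has already been established. The key observation is that a bijection preserving the full causal order must also preserve the finer light-cone incidence structure, so that Alexandrov's conclusion applies and yields a linear conformal automorphism; the remaining work is to show that such a conformal automorphism is precisely a composition of a Lorentz transformation, a translation, and a dilation.

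First I would recover the null structure from the causal order. The chronological relation $\prec$ determines the open timelike (chronological) relation $\ll$ as its ``interior,'' and the boundary of the set $\{y : x \prec y\}$ relative to the closure of $\{y : x \ll y\}$ consists exactly of the null rays emanating from $x$. Concretely, two events $x,y$ lie on a common null line precisely when $x \prec y$ (or $y \prec x$) holds but no chain of genuinely timelike steps connects them, a condition expressible purely in terms of $\prec$. Since $f$ preserves $\prec$ in both directions, it preserves this derived null relation, hence maps null lines to null lines. By hypothesis $f$ is a global bijection of $M \simeq \mathbb{R}^n$ with $n \ge 3$, so Theorem~\ref{theorem:Alexandrov} applies and $f$ is a linear conformal automorphism, i.e., after composing with a translation (to fix $f(0)=0$) it is linear and satisfies $\eta(f(x),f(y)) = \Omega\, \eta(x,y)$ for some constant conformal factor $\Omega$.

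Next I would pin down the structure of this linear conformal map. Writing $f(x) = \Lambda x$ for the linear part, the conformality condition $\Lambda^{\mathsf T} \eta\, \Lambda = \Omega\, \eta$ with constant $\Omega>0$ forces $\Lambda = \lambda\, L$ where $\lambda = \sqrt{\Omega}$ and $L^{\mathsf T}\eta\, L = \eta$, so $L \in O(1,n-1)$ is a genuine Lorentz transformation. This splits $f$ into the asserted dilation $x \mapsto \lambda x$, the Lorentz part $L$, and the translation. Finally I would fix the orthochronous (time-orientation-preserving) branch: because $f$ preserves the direction of causal precedence --- $x \prec y$ maps to $f(x) \prec f(y)$ rather than reversing to $\succ$ --- the future cone is sent to the future cone, which excludes the time-reversing components and pins $L$ to $O^{\uparrow}(1,n-1)$.

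The main obstacle I expect is the first step: the purely order-theoretic recovery of the null structure, i.e., showing rigorously that light rays are definable from $\prec$ alone. One must verify that the relation ``$x \prec y$ but $x$ and $y$ are not connected by any strictly timelike chain'' genuinely singles out null-separated pairs and does not accidentally capture, for instance, spacelike pairs (which are simply $\prec$-incomparable) or fail on degenerate collinear configurations. Care is also needed because $\prec$ as defined includes null signals, so the chronological relation $\ll$ must be extracted as its topological interior before the null boundary can be identified; handling this cleanly, and confirming that $f$ respects the induced topology well enough to preserve boundaries, is the delicate part. Once null lines are shown to be order-definable and hence $f$-invariant, the reduction to Alexandrov and the subsequent algebra are routine.
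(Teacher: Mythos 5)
The paper does not actually prove this theorem: it states it as a classical result and cites Zeeman, so your proposal can only be measured against the standard proofs in the literature rather than against an argument in the text. Your overall strategy --- recover the null-incidence structure from the causal order, invoke Theorem~\ref{theorem:Alexandrov}, then decompose the resulting conformal automorphism --- is a legitimate and indeed standard route, and the algebraic endgame is correct and routine: after translating so that $f(0)=0$, the condition $\Lambda^{\mathsf T}\eta\,\Lambda=\Omega\,\eta$ with $\Omega>0$ (positive because timelike vectors must map to timelike vectors) gives $\Lambda=\sqrt{\Omega}\,L$ with $L\in O(1,n-1)$, and preservation of the direction of $\prec$ forces $L$ into the orthochronous component.

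The genuine gap is exactly where you suspect it, and the particular mechanism you sketch for closing it is the wrong one. Since $f$ is only assumed to be an order-preserving bijection --- continuity is famously a \emph{conclusion} of Zeeman's theorem, not a hypothesis --- you cannot argue via topological interiors or boundaries of the sets $\{y: x\prec y\}$ until you have shown that the relevant topology is itself order-definable; as written, that step is circular. The clean repair avoids topology altogether. With $\prec$ the causal (timelike-or-null) relation of the paper, declare $x$ and $y$ to be null-related iff $x\prec y$, $x\neq y$, and the order interval $\{z: x\prec z\prec y\}$ is \emph{totally ordered} by $\prec$: in Minkowski space this interval is a full-dimensional double cone (containing incomparable pairs) when $y-x$ is timelike, and degenerates to the null segment from $x$ to $y$ (a chain) when $y-x$ is null, because a future-directed causal decomposition of a null vector forces both summands to be null and proportional. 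This characterization is stated purely in terms of $\prec$ and is therefore manifestly preserved by $f$. One then checks that for $n\ge 3$ any three pairwise null-related points are collinear (the sum of two future-directed null vectors is null only if they are proportional), so null lines are exactly the maximal pairwise null-related sets and are carried to null lines by $f$. With that lemma supplied, your reduction to Theorem~\ref{theorem:Alexandrov} and the subsequent decomposition go through as you describe.
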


Several remarks on the hypotheses are in order:
\begin{itemize}
    \item \textit{Dimensionality:} The restriction $n \ge 3$ is essential. In $(1+1)$-dimensional spacetime, the theorem fails: the null lines form two independent families, and maps can independently rescale each family while preserving causal order.

    \item \textit{Bijectivity and global domain:} The map must be defined on and surject onto the \emph{entire} Minkowski space. This excludes local causal isomorphisms, which are far more numerous and less constrained.

    \item \textit{Full causal structure:} Preserving only the future-directed relation $\prec$ is insufficient. One must preserve both temporal orientations; equivalently, the map must preserve the relation of being causally connectable (whether to the past or future). Maps preserving only one direction could, in principle, include time-reversing transformations not generated by the orthochronous Lorentz group.
\end{itemize}

\subsection{Limitations and Extensions}

It is important to recognize the scope of these theorems. They apply to flat Minkowski spacetime as a global entity. Extensions to curved spacetimes require substantial additional machinery: In general relativity, the celebrated result of Malament~\cite{Malament-1977} establishes that the causal structure of a distinguishing spacetime determines the metric up to a conformal factor, but the proof strategy and required conditions differ significantly from the flat case. For the present discussion, we restrict attention to special relativity, where the Alexandrov-Zeeman framework applies directly.

\subsection{The Conformal Ambiguity}

Both theorems lead to the same physical conclusion: the causal structure determines the metric $g_{ab}$ up to a \emph{conformal factor}. Under the strict conditions of mapping the whole of Minkowski space to itself bijectively, this factor reduces to a global positive constant $\lambda$:
\begin{equation}
    g'_{ab} = \lambda^2 \eta_{ab}.
\end{equation}

This leads to a profound physical consequence: A universe defined solely by light rays (or causal order) cannot distinguish a second from a millennium. The transformation $x^\mu \to \lambda x^\mu$ for any $\lambda > 0$ is a valid symmetry of the causal structure. The ``shape'' of physics---the pattern of which events can influence which---is preserved, but all durations and lengths are uniformly rescaled. To distinguish absolute time scales, one must introduce additional structure that breaks this dilation symmetry.

\section{Formal Unification: The Clock as Symmetry Breaker}
\label{sec:Formal}

We now formalize the connection between the MPSV clock and the Alexandrov-Zeeman conformal factor. We show that specifying the normalization of a single clock worldline is sufficient to fix the global scale $\lambda$.

\subsection{Definitions}

\begin{definition}[Conformal Class]
Let $\eta$ be the standard Minkowski metric. The \emph{conformal class} $[\eta]$ determined by the causal structure is the set of metrics $\{\lambda^2 \eta : \lambda > 0\}$.
\end{definition}

\begin{definition}[Proper Time]
Let $\gamma: I \to M$ be a smooth timelike curve parameterized by an arbitrary parameter $s \in I \subseteq \mathbb{R}$. The \emph{proper time} along $\gamma$ with respect to a metric $g$ is:
\begin{equation}
\tau_g(s) = \int_{s_0}^{s} \sqrt{-g(\dot{\gamma}(s'), \dot{\gamma}(s'))} \, ds',
\label{eq:proper-time-integral}
\end{equation}
where $s_0 \in I$ is a reference parameter value (corresponding to the initial event) and dot denotes $d/ds'$.
\end{definition}

\begin{definition}[Bona Fide Clock]
A \emph{bona fide clock} consists of:
\begin{enumerate}
    \item a timelike worldline $\Gamma \subset M$, considered as a point set;
    \item a physical mechanism that produces readings $\tau_{\mathrm{clock}}: \Gamma \to \mathbb{R}$ along the worldline.
\end{enumerate}
The clock is \emph{ideal} with respect to the physical metric $g_\star$ if its readings coincide with the proper time:
\begin{equation}
\tau_{\mathrm{clock}}(\gamma(s)) = \tau_{g_\star}(s) + \mathrm{const},
\label{eq:ideal-clock-condition}
\end{equation}
for any parameterization $\gamma: I \to \Gamma$.
\end{definition}

The key distinction is that the worldline $\Gamma$ and the reading function $\tau_{\mathrm{clock}}$ are \emph{physically given} by the apparatus, independent of any coordinate choice or metric assumption. The condition~\eqref{eq:ideal-clock-condition} then becomes a constraint that the unknown metric $g_\star$ must satisfy.

\subsection{Uniqueness Theorem}

\begin{theorem}[Single Clock Fixes Scale]
Let $M \cong \mathbb{R}^4$ be equipped with a causal order determining the conformal class $[\eta]$. Let $(\Gamma, \tau_{\mathrm{clock}})$ be a single ideal clock following an inertial trajectory. Then the physical metric $g_\star \in [\eta]$ is uniquely determined.
\end{theorem}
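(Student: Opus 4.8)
The plan is to reduce the problem to the determination of a single positive scalar and then show that the ideal-clock condition pins it down uniquely. Since the causal order fixes only the conformal class $[\eta]$, every candidate for the physical metric has the form $g = \lambda^2 \eta$ for a unique $\lambda > 0$; thus determining $g_\star$ is equivalent to determining the corresponding scale $\lambda_\star$. The essential computational input is the homogeneity of the proper-time functional~\eqref{eq:proper-time-integral} under conformal rescaling: because $\sqrt{-\lambda^2\eta(\dot{\gamma},\dot{\gamma})} = \lambda\sqrt{-\eta(\dot{\gamma},\dot{\gamma})}$, one obtains $\tau_{\lambda^2\eta}(s) = \lambda\,\tau_\eta(s)$ for a given parameterization of $\Gamma$ with a common reference point $s_0$. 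I would establish this scaling law first, noting that it descends to the worldline independently of the choice of parameterization because proper time is reparameterization invariant.

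Next I would carry out the uniqueness argument proper. Suppose, for contradiction, that two metrics $g_\star = \lambda^2\eta$ and $g_\star' = \mu^2\eta$ in $[\eta]$ both render the clock ideal. Applying the ideal-clock condition~\eqref{eq:ideal-clock-condition} to each and using the scaling law gives two representations of the \emph{same} physically given reading function, $\tau_{\mathrm{clock}}(\gamma(s)) = \lambda\,\tau_\eta(s) + c_1 = \mu\,\tau_\eta(s) + c_2$ along $\Gamma$. Subtracting yields $(\lambda-\mu)\,\tau_\eta(s) + (c_1-c_2) = 0$ identically along the worldline. Because $\Gamma$ is timelike, the integrand $\sqrt{-\eta(\dot{\gamma},\dot{\gamma})}$ is strictly positive, so $\tau_\eta$ is strictly monotonic and assumes at least two distinct values; a nonconstant function can satisfy such an affine relation only if $\lambda=\mu$ (whence also $c_1=c_2$). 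Hence $g_\star = g_\star'$. For the inertial trajectory one may make this fully explicit by writing $\gamma(s)=p_0+s\,v$ with $v$ a constant timelike vector, so that $\tau_\eta(s)=s\sqrt{-\eta(v,v)}$ is manifestly linear and nonconstant.

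The conceptual crux --- and the step I expect to require the most care --- is not the algebra but the justification that the \emph{same} reading function $\tau_{\mathrm{clock}}$ enters both constraints. This rests on the feature emphasized after the definition of the bona fide clock: the worldline $\Gamma$ and the map $\tau_{\mathrm{clock}}$ are supplied by the physical apparatus prior to, and independently of, any metric assignment. I would therefore spell out that $\tau_\eta$ is computed from a fixed, arbitrarily chosen representative $\eta$ of the conformal class, whereas the left-hand side is metric-independent data; only under this reading does the comparison of the two affine relations make sense. Equivalently, one can phrase the conclusion constructively: choosing any two distinct events $p,q\in\Gamma$, the scale is recovered as $\lambda_\star = [\tau_{\mathrm{clock}}(q)-\tau_{\mathrm{clock}}(p)]/[\tau_\eta(q)-\tau_\eta(p)]$, with the denominator strictly positive and the quotient independent of the chosen pair by the ideality hypothesis. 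This exhibits $\lambda_\star$, and hence $g_\star=\lambda_\star^2\eta$, as uniquely determined.
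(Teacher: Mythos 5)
Your proposal is correct and follows essentially the same route as the paper: reduce the problem to determining the single scale $\lambda$ via the conformal homogeneity $\sqrt{-\lambda^2\eta(\dot{\gamma},\dot{\gamma})}=\lambda\sqrt{-\eta(\dot{\gamma},\dot{\gamma})}$, then extract $\lambda$ by comparing the physically given clock readings against the reference-metric proper time along $\Gamma$. The only difference is presentational --- the paper differentiates the ideal-clock condition and checks that the pointwise ratio $(d\tau_{\mathrm{clock}}/ds)/\alpha(s)$ is reparameterization invariant, whereas you work with the integrated proper time and a two-event difference quotient --- but these are the same argument in differential versus integral form.
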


\begin{proof}
Since the causal structure fixes the metric up to a global scale (via the Alexandrov-Zeeman theorems for $n \ge 3$), the physical metric must take the form:
\begin{equation}
g_\star = \lambda^2 \eta, \quad \lambda > 0,
\end{equation}
where $\eta$ is an arbitrary reference metric in standard Minkowski coordinates.

Let $\gamma: \mathbb{R} \to \Gamma$ be \emph{any} smooth parameterization of the worldline by a parameter $s$. The clock's physical mechanism provides the reading $\tau_{\mathrm{clock}}$ along the worldline; differentiating with respect to $s$ yields the empirically determined tick-rate $d\tau_{\mathrm{clock}}/ds$. For an inertial clock, this rate is constant.

Differentiating the ideal clock condition~\eqref{eq:ideal-clock-condition} with respect to $s$ yields:
\begin{equation}
\frac{d\tau_{\mathrm{clock}}}{ds} = \sqrt{-g_\star(\dot{\gamma}, \dot{\gamma})} = \sqrt{-\lambda^2 \eta(\dot{\gamma}, \dot{\gamma})} = \lambda \sqrt{-\eta(\dot{\gamma}, \dot{\gamma})}.
\end{equation}

Define $\alpha(s) \equiv \sqrt{-\eta(\dot{\gamma}(s), \dot{\gamma}(s))}$, which is the norm of the tangent vector in the reference metric. This quantity depends on the arbitrary parameterization $s$, but the ratio
\begin{equation}
\lambda = \frac{d\tau_{\mathrm{clock}}/ds}{\alpha(s)}
\end{equation}
is \emph{independent} of the choice of parameterization. To see this, consider a reparameterization $s \to \tilde{s}(s)$. Then $d\tau_{\mathrm{clock}}/d\tilde{s} = (d\tau_{\mathrm{clock}}/ds)(ds/d\tilde{s})$ and $\tilde{\alpha} = \alpha \cdot |ds/d\tilde{s}|$, so their ratio is invariant.

Since $d\tau_{\mathrm{clock}}/ds$ is physically given by the clock mechanism and $\alpha(s)$ is computed from the reference coordinates, the scale factor $\lambda$ is uniquely determined. Thus $g_\star = \lambda^2 \eta$ is unique.
\end{proof}

\subsection{The Hierarchy of Metrics}

It is instructive to explicitly distinguish the roles of the metric tensors discussed above:

\begin{enumerate}
    \item \textbf{The Reference Metric ($\eta_{ab}$):} An arbitrary mathematical auxiliary (the standard Minkowski metric) that correctly describes the causal incidence (light cones) but possesses no physical scale.

    \item \textbf{The Conformal Class ($[\eta]$):} The set of all possible metrics allowed by the Alexandrov-Zeeman theorems. This is a one-parameter family:
    \begin{equation}
        g'_{ab} = \lambda^2 \eta_{ab}, \quad \forall \lambda \in (0, \infty).
    \end{equation}
    At this level, the theory cannot distinguish between a second and a millennium; any value of $\lambda$ preserves the causal order.

    \item \textbf{The Physical Metric ($g_\star$):} The unique element of the conformal class selected by the physical apparatus. It is obtained by evaluating at the specific scale factor determined by the clock:
    \begin{equation}
        g_\star = \lambda_{\mathrm{phys}}^2 \, \eta_{ab}.
    \end{equation}
\end{enumerate}

Thus, while the causal structure generates the conformal equivalence class, the single bona fide clock collapses this class to the unique physical metric.

\section{Discussion}

The formal derivation above confirms the intuition that the counting of fundamental constants in Special Relativity is exactly one. The single unit is the physical realization of the mathematical mechanism required to break the dilation symmetry of incidence geometry.

\begin{itemize}
    \item \textit{Incidence Relations (Alexandrov/Zeeman):} These fix the conformal class $[\eta]$. Physically, they define the ``shape'' of spacetime and the structural role of $c$, but leave the scale undetermined.
    \item \textit{The Clock:} This provides the constraint that the metric must reproduce the clock's physical readings as proper time, thereby eliminating the dilation sector.
\end{itemize}

Therefore, the symmetry group of the underlying structure is the Poincar{\'e} group extended by dilations. The causal structure provides the Poincar{\'e} sector, and the single unit (the clock) fixes the scale.

Intuitively, the mechanism by which a single clock fixes the conformal factor can be understood through three complementary lenses. Geometrically, the causal structure (as rigidly reflected in Alexandrov--Zeeman--type results) is analogous to a map of a coastline that perfectly captures the shape of bays and peninsulas---preserving all angles---but lacks a scale bar. In this conformal picture, one cannot distinguish between a map of a tiny island and a massive continent; zooming in or out (global dilation) changes nothing about the geometric shape. The introduction of a bona fide clock acts effectively as drawing a fixed legend on this map. By declaring a specific physical interval of clock time to have a definite value (say, one second), the ``zoom level'' of the entire spacetime manifold is instantly frozen, converting purely relative shape into absolute geometry.

Algebraically, this distinction arises from the disparate nature of null versus timelike vectors. Light rays travel along null geodesics where the spacetime interval vanishes ($ds^2 = 0$). Because zero is invariant under multiplicative scaling ($0 \cdot \lambda^2 = 0$), light is inherently ``blind'' to the conformal factor; it perceives the causal connectivity but not the metric scale. A clock, however, follows a timelike trajectory whose tangent can be normalized to a non-zero constant (e.g., $ds^2 = -1$ per unit of proper time). Unlike the null case, this normalization is sensitive to scaling: $-1 \cdot \lambda^2 \neq -1$ unless $\lambda = 1$. Thus, the mere specification of a trajectory with fixed non-zero squared interval per tick mathematically forbids the arbitrary rescaling allowed by the light cones and singles out a unique metric from the conformal class.

Physically, this symmetry breaking is rooted in the nature of matter. Pure conformal symmetry characterizes idealized massless fields, which lack an intrinsic length or time scale. However, the construction of a physical clock---whether an atom with a characteristic transition frequency or a macroscopic oscillator---inevitably involves massive matter and thus introduces a dimensionful parameter. In quantum theory, mass is associated with a characteristic length scale via the Compton wavelength, and bound systems have characteristic periods. Once we adopt one such period as a unit of time, we effectively inject a fundamental ruler into the geometry. The clock is not merely a passive observer but a physical agent that operationally breaks the global scale invariance of the causal structure and thereby fixes the metric scale.

It is important to distinguish the strict mathematical necessity of the clock from the operational demonstration provided by MPSV. As shown in our Theorem\ref{theorem:Alexandrov}, the mere \emph{existence} of a bona fide clock---defined by the normalization condition $g(\dot{\gamma}, \dot{\gamma})=-1$---is sufficient to fix the conformal factor $\lambda$. Mathematically, no further protocol is required; the geometry is uniquely determined. The value of the MPSV protocol (Heron's formula) lies in demonstrating the \emph{metrological sufficiency} of this time standard. It answers the practical question: ``Granted that the geometry is fixed, can we measure spatial intervals without reintroducing a ruler?'' By deriving spatial distance explicitly from time measurements, MPSV confirms that the single constant provided by the clock serves as the measure for the entire spacetime manifold, not just the temporal axis.

Finally, regarding the role of $c$: kinematically, MPSV~\cite{matsas-2024} correctly identify $c$ as a mere unit converter between the fundamental time and the derivative space. Yet, dynamically, $c$ retains a privileged status. It is the unique velocity parameter that ensures the form invariance of the equations of motion, such as those of electrodynamics~\cite{svozil-relrel,svozil-2001-convention}. Thus, while the clock fixes the scale of the universe, $c$ defines the rigid phenomenal structure necessary for the clock to operationally tick in such a way as to guarantee the form invariance of the physical laws under Lorentz boosts.

\begin{acknowledgments}
This text was partially created and revised with assistance from one or more of the following large language models:  Gemini 3 Pro
and gpt-5.1-high. All content, ideas, and prompts were provided by the author.

This research was funded in whole or in part by the Austrian Science Fund (FWF) Grant \href{https://doi.org/10.55776/PIN5424624}{10.55776/PIN5424624}.
The author acknowledges TU Wien Bibliothek for financial support through its Open Access Funding Programme.

The author declares no conflict of interest.
\end{acknowledgments}

\bibliography{svozil}

\end{document}